\documentclass[a4paper,twocolumn,11pt,unpublished]{quantumarticle}
\pdfoutput=1

\usepackage[utf8]{inputenc}
\usepackage[english]{babel}
\usepackage[T1]{fontenc}
\usepackage{graphicx}
\usepackage{amsmath,amssymb,amsfonts,mathtools,bm,amsthm}
\usepackage[numbers,sort&compress]{natbib}
\usepackage{booktabs}
\usepackage{microtype}
\usepackage{xcolor}
\usepackage[colorlinks=true,citecolor=blue,urlcolor=blue,linkcolor=blue]{hyperref}

\theoremstyle{plain}
\newtheorem{theorem}{Theorem}

\newcommand{\Tr}{\operatorname{Tr}}
\newcommand{\Inv}{\operatorname{Inv}}

\newcommand{\ED}{E_{\mathrm D}}
\newcommand{\EF}{E_{\mathrm F}}
\newcommand{\cH}{\mathcal H}
\newcommand{\cF}{\mathcal F}
\newcommand{\id}{\mathbb I}

\begin{document}

\title{Thermal Activation of Divergent Distillable Entanglement under Non-Abelian Strong Symmetry}
\author{Shuai Zeng}
\email{zengshuai@cqupt.edu.cn}
\affiliation{Chongqing University of Posts and Telecommunications, Chongqing 400065, China}
\maketitle

\begin{abstract}
Heating usually destroys quantum entanglement. We show that thermalization constrained to a non-Abelian strong-symmetry sector can instead generate a distillable resource that diverges with system size. In an exactly solvable local dimer chain, entanglement across an equal bipartition is exactly zero at $T=0$, whereas every fixed $T>0$ yields
$\ED=\frac12\log_2N+C(T)+o(1)$.
Measurements of the two half-chain representation labels convert thermally populated non-Abelian sectors into standard ebits. More generally, for global-singlet thermal states of finite-range, uniformly bounded, locally $SU(2)$-invariant chains, there is a nonzero high-temperature interval in which the protocol yield satisfies
$Y_N=\frac12\log_2N+O_\beta(1)$ and $\ED\geq Y_N$.
For the dimer chain, the full finite-size onset is governed by a universal Bessel-function crossover with $T_*(N)=\Delta/[\ln N+O(1)]$. Exact diagonalization of a frustrated $J_1$--$J_2$ chain shows the expected finite-size signatures. Thus thermal fluctuations can create entanglement across a macroscopic cut and convert it into an unbounded operational quantum resource.
\end{abstract}

\section{Introduction}\label{sec:intro}

Heating is normally an enemy of entanglement: increasing temperature mixes quantum states and erases the coherences needed for nonclassical correlations. Entanglement is an operational resource central to quantum information and a basic organizing quantity in many-body physics~\cite{Horodecki2009,Amico2008}. At finite temperature, locality strongly constrains bipartite entanglement and its spatial organization~\cite{Kuwahara2022}. Unconstrained Gibbs states of local spin systems become separable above a sufficiently high temperature~\cite{Bakshi2024}; in one dimension, the Schmidt number across a spatial cut remains finite at every positive temperature~\cite{Bakshi2026}, and sufficiently separated regions become separable~\cite{Scalet2026}. Finite-temperature entanglement can nevertheless persist, emerge or be enhanced in particular models~\cite{Arnesen2001,Vedral2004}. We ask a stronger question: can heating start from exactly zero entanglement across a macroscopic cut and create an LOCC-distillable resource that grows without bound with system size?

Non-Abelian strong symmetry makes this reversal possible by changing the equilibrium ensemble. Strong symmetry decomposes open-system dynamics into invariant sectors~\cite{Buca2012}; symmetry-preserving dynamics initialized in a fixed charge sector can therefore thermalize within that sector. Recent results show that fixed-charge ensembles can avoid high-temperature separability even for Abelian symmetries~\cite{Garratt2026,Negari2025}. Non-Abelian charges also reshape typical and symmetry-resolved entanglement~\cite{Majidy2023,Bianchi2024}. At infinite temperature, maximally mixed invariant sectors possess exactly distillable representation-space entanglement~\cite{Livine2005,Moharramipour2024}, while strongly symmetric stationary states exhibit related logarithmic entanglement structures~\cite{Li2025}. The remaining finite-temperature questions are whether local energetic weighting preserves the growing distillable resource and whether thermalization can generate it from a ground state that factorizes across the chosen cut.

The mechanism converts ordinary thermal fluctuations into representation-space entanglement. In an equal bipartition $A|B$ of a global $SU(2)$ singlet, the half-chain spin distribution has a characteristic width $j=O(\sqrt N)$. The singlet constraint locks the two halves into the same spin-$j$ representation, whose unique invariant vector is maximally entangled with Schmidt rank $2j+1$. Alice and Bob each measure the representation label of their half-chain and discard the associated multiplicity degrees of freedom, thereby extracting $\log_2(2j+1)$ ebits. These operations are local with respect to the bipartition, although the representation measurement can be collective within each half. The thermal representation width therefore produces an average yield of order $\frac12\log_2N$.

\subsection{Main results and scope}\label{sec:contributions}

\paragraph{Exact activation from zero entanglement.}
We construct a local dimer chain whose unique ground state factorizes across the central bipartition. Sector-preserving heating populates non-Abelian representation sectors, and an explicit LOCC protocol attains $\ED=\EF=Y_N$ exactly at every finite size. For every fixed $T>0$, $\ED=\frac12\log_2N+C(T)+o(1)$, whereas $\ED=0$ at $T=0$. The full dilute-to-thermodynamic crossover is a closed Bessel-function law controlled by $x=(N/4)e^{-\Delta/T}$, with onset scale $T_*(N)=\Delta/[\ln N+O(1)]$.

\paragraph{General local-chain theorem.}
The activation mechanism is not tied to the decoupled dimer limit. Its universal high-temperature content is summarized by the following theorem, stated here so that the assumptions and scaling are visible at the outset.

\begin{theorem}[Finite-temperature non-Abelian distillation]\label{thm:main}
Consider a one-dimensional chain with a fixed finite-period unit cell, a uniformly bounded finite-range interaction, exact local $SU(2)$ invariance and reflection symmetry about an equal bipartition. Let
$\rho_{\beta,0}=P_0e^{-\beta H_N}/\Tr(P_0e^{-\beta H_N})$
be the thermal state in the global-singlet sector, and let $Y_N$ be the yield of the local representation-space distillation protocol. There exists $\beta_0>0$ such that, for $0\leq\beta<\beta_0$ and all sufficiently large even $N$ for which the singlet sector is nonempty,
\begin{equation}
Y_N=\frac12\log_2N+O_\beta(1),
\qquad
\ED(\rho_{\beta,0})\geq Y_N,
\label{eq:intro-theorem}
\end{equation}
where the $O_\beta(1)$ term is independent of $N$.
\end{theorem}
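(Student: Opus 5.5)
The plan has two parts. First, an exact algebraic reduction turns $\rho_{\beta,0}$ into a mixture of maximally entangled representation states, which yields $\ED\geq Y_N$ with $Y_N=\sum_j p_j\log_2(2j+1)$. Second, an analytic estimate of the weights $p_j$ in the high-temperature regime yields $\sum_jp_j\log_2(2j+1)=\frac12\log_2N+O_\beta(1)$.

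\emph{Algebraic reduction.} Because every interaction term is $SU(2)$-invariant, the global rotation $U(g)=U_A(g)\otimes U_B(g)$ commutes with $H_N$. Decompose each half as $\cH_A=\bigoplus_j V_j\otimes M^A_j$ and $\cH_B=\bigoplus_{j'}V_{j'}\otimes M^B_{j'}$, where $M^A_j,M^B_{j'}$ are the multiplicity spaces.
\begin{itemize}
\item Schur's lemma shows that the range of $P_0$ is $\bigoplus_j |\Phi_j\rangle\otimes M^A_j\otimes M^B_j$, where $|\Phi_j\rangle\in V_j\otimes V_j$ is the unique invariant vector. This vector is maximally entangled with Schmidt rank $2j+1$.
\item Hence $\rho_{\beta,0}$ is supported on this space. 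It may carry coherences between different $j$, but only in the multiplicity factors.
\item Alice measures the Casimir label $j_A$ of her half. The support forces $j_B=j_A$, and the post-measurement state is $|\Phi_j\rangle\langle\Phi_j|\otimes\sigma_j$ with probability $p_j=\Tr(P_0\Pi^A_je^{-\beta H_N})/\Tr(P_0e^{-\beta H_N})$.
\item Alice and Bob then discard the multiplicity spaces, leaving $\Phi_j$.
\end{itemize}
Applying this to each of $n$ copies gives a flagged ensemble of maximally entangled states with known Schmidt ranks. Standard concentration into ebits, by LOCC on the flagged copies, then gives the rate $Y_N=\sum_jp_j\log_2(2j+1)$. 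It follows that $\ED(\rho_{\beta,0})\geq Y_N$.

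\emph{Character representation of $p_j$.} Write $P_0=\int dg\,U(g)$ and $\Pi^A_j=(2j+1)\int dh\,\overline{\chi_j(h)}\,U_A(h)$. This gives
\begin{equation}
p_j\propto(2j+1)\int dg\,dh\,\overline{\chi_j(h)}\,F_N(hg,g),
\end{equation}
where $F_N(g_A,g_B)=\Tr\bigl(U_A(g_A)U_B(g_B)e^{-\beta H_N}\bigr)/\Tr e^{-\beta H_N}$. The twists $U_A,U_B$ are products of on-site unitaries. A high-temperature cluster (polymer) expansion, uniform in the twist angles because the twists are unitary and the interaction is bounded and finite-range, then gives
\begin{equation}
\log F_N=\frac{N}{2}\bigl[\phi_\beta(\theta_A)+\phi_\beta(\theta_B)\bigr]+b_\beta(g_A,g_B)+o(1),
\end{equation}
valid for $\beta<\beta_0$. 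Here $\phi_\beta$ is a class function, analytic in $\beta$, with $\phi_\beta(\theta)=-\chi_\beta\theta^2/2+O(\theta^4)$ and $\chi_\beta>0$ near $\beta=0$, since at $\beta=0$ it reduces to the free-spin result. The boundary term $b_\beta$ is bounded. Reflection symmetry makes the two halves contribute identically.

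\emph{Saddle-point estimate and the yield.} A saddle-point/local-limit analysis at $\theta\approx0$ then reproduces the infinite-temperature form
\begin{equation}
p_j\asymp\frac{(2j+1)^2}{(N\chi_\beta)^{3/2}}\,e^{-c\,j^2/(N\chi_\beta)}.
\end{equation}
This follows the Weyl-integration picture. One also has to exclude other near-maximal points of $|F_N|$, such as $\theta=2\pi$ (the element $-\id$), which produce the parity constraints; this is why $N$ is restricted to values where the singlet sector is nonempty.
\begin{itemize}
\item \emph{Upper bound.} Jensen's inequality together with $\mathbb E[j(j+1)]=O_\beta(N)$ gives $Y_N\leq\frac12\log_2N+O_\beta(1)$.
\item \emph{Lower bound.} The small-$j$ tail satisfies $P(j\leq t\sqrt N)\leq Ct^3$. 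Therefore $\mathbb E\bigl[\log_2\bigl(\sqrt N/(2j+1)\bigr)^+\bigr]=O_\beta(1)$, which gives the matching lower bound.
\end{itemize}

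\emph{Main obstacle.} The hard step is making the cluster expansion and the local limit theorem uniform: controlling $F_N$ for all twist angles, not only small ones, so that contributions away from $\theta\in\{0,2\pi\}$ are exponentially small in $N$. This must hold with a single $\beta_0$ independent of $N$. I would first establish $|F_N|\leq e^{-cN\theta^2}$ from the polymer expansion around the product (infinite-temperature) measure. I would then treat the $O(1/\sqrt N)$ neighbourhood of the saddle by Gaussian expansion.
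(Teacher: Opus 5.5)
Your route is genuinely different from the paper's, and the difference is in how the cut is handled.

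\textbf{The paper's route.} The paper never twists the coupled chain. It first deletes the crossing terms, so that $H_A$ commutes with $U_A(g)$ and the twist becomes a complex one-site field in the exponent, $\Tr\exp[-\beta H_A+i\bm X\cdot\bm M_n]$. This gives a single-half class function $f_{n,\beta}$ with nonnegative character coefficients $a_{n,j}$, and Parseval plus the group Laplacian then give the decoupled tails. The cut is restored afterwards by quantum belief propagation. Its $SU(2)$-invariant dressing, truncated at $R=q\log N$, is banded in the half-chain spin and shifts the tails by only $O(\log N)\ll\sqrt N$.

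\textbf{Your route.} You keep the cut inside the two-variable twist $F_N(g_A,g_B)$. When $g_A\neq g_B$, $U_A(g_A)U_B(g_B)$ does not commute with the crossing terms, so the exponent trick is unavailable. You need the insertion form of the polymer expansion: Taylor-expand $e^{-\beta H_N}$, and let each site not covered by an interaction tuple factor out as $\chi_s(g_A)$ or $\chi_s(g_B)$. The crossing clusters are absorbed into a bounded $b(g_A,g_B)$. The argument then closes through
\begin{equation*}
G(h)=\int dg\,F_N(gh,g)=\frac{\Tr(P_0U_A(h)e^{-\beta H_N})}{\Tr e^{-\beta H_N}}=\sum_j\frac{\chi_j(h)}{2j+1}\,w_j,\qquad w_j\geq0,\qquad p_j=\frac{w_j}{G(e)}.
\end{equation*}
$G$ is a class function even though $F_N$ is invariant only under simultaneous conjugation. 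The bound $w_j\leq(2j+1)^2\iint|F_N|\,dg_A\,dg_B\lesssim(2j+1)^2n^{-3}$, together with $G(e)\gtrsim n^{-3/2}$, gives your cubic low tail. The identity $\mathbb E[j(j+1)]=-\Delta G(e)/G(e)$ gives the moment.

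\textbf{Comparison.} Your route removes QBP, the truncation and the $\log N$ bookkeeping entirely. The price is a non-commuting twisted expansion and a four-corner saddle analysis on $SU(2)\times SU(2)$. The paper's route is more modular: its cut-stability lemma is independent of the bulk expansion and holds wherever QBP locality does.

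\textbf{First repair: uniformity in the twist angles.} The expansion of $\log F_N$ is \emph{not} uniform in the twist angles, and unitarity of the twists is irrelevant. The activities of the insertion expansion carry a factor $\chi_s(g)^{-|\gamma|}$, and $\chi_s$ vanishes at some angles (e.g.\ $\theta=\pi$ for $s=1/2$). Already at $\beta=0$, $F_N=(\chi_s(g_A)/d_s)^n(\chi_s(g_B)/d_s)^n$ vanishes there. So no global form $\log F_N=\frac N2[\phi_\beta(\theta_A)+\phi_\beta(\theta_B)]+b$ with bounded $b$ can exist.

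The analytic expansion holds only in fixed neighbourhoods of the four pairs $(\pm I,\pm I)$. All four are genuine saddles of equal modulus, related by the scalar phases $(-1)^{2sn}$, and they must be kept, not excluded. On the complement you need a non-perturbative bound. The paper's Lemma C.1 supplies one, and it carries over essentially verbatim to your two-variable twist: bound each Taylor term absolutely, and let every uncovered site in a half whose twist lies at distance $\geq\delta$ from the centre contribute $|\chi_s(g)|/d_s\leq\bar q_\delta<1$. This gives $|F_N|\leq e^{-cn}$ for $\beta$ below an explicit $\delta$-dependent threshold.

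\textbf{Second repair: the Casimir moment.} You assert $\mathbb E[j(j+1)]=O_\beta(N)$ for the coupled chain without derivation. Via $G$, this requires bounding $-\Delta_hF_N(gh,g)|_{h=e}$ after integrating over $g$. Near the centres this uses Cauchy estimates from the local expansion, with $\nabla\phi_\beta(\bm X)=O(\|\bm X\|)$ and $\nabla b$ bounded. Away from them it uses the same support-cover bound with one or two marked sites, exactly parallel to the paper's gradient estimate.
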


\paragraph{Mechanism, proof and finite-size verification.}
The proof combines a uniform complex-twist cluster expansion, character estimates around the two central elements of $SU(2)$ and a quasi-local quantum-belief-propagation dressing of the interactions crossing the bipartition. These ingredients show that the $O(\sqrt N)$ representation width is a bulk fluctuation stable against cut-local perturbations. Exact diagonalization of a frustrated $J_1$--$J_2$ chain is consistent with an order-one residual $Y_N-\frac12\log_2N$ and an order-one distribution of $j/\sqrt N$ over the accessible sizes. The exact dimer result holds for every $T>0$; the general theorem establishes the universal coefficient in a finite high-temperature interval under the assumptions stated above.

\section{Results}\label{sec:results}

\subsection{Non-Abelian symmetry converts thermal fluctuations into distillable entanglement}\label{sec:mechanism}

Let a compact group act on the two halves as
\begin{equation}
 \cH_A=\bigoplus_{\lambda}V_{\lambda}\otimes M^A_{\lambda},
 \qquad
 \cH_B=\bigoplus_{\mu}V_{\mu}\otimes M^B_{\mu},
 \label{eq:general-decomp}
\end{equation}
where $V_{\lambda}$ is an irreducible representation and $M^{A,B}_{\lambda}$ is the associated multiplicity space. In the invariant subspace, Schur's lemma fixes the representation-space factor:
\begin{equation}
 \Inv(\cH_A\otimes\cH_B)
 =\bigoplus_{\lambda}
 \operatorname{span}\{|\Phi_{\lambda}\rangle\}
 \otimes M^A_{\lambda}\otimes M^B_{\bar\lambda}.
 \label{eq:inv-decomp}
\end{equation}
Here $|\Phi_{\lambda}\rangle$ is the unique normalized invariant vector in $V_{\lambda}\otimes V_{\bar\lambda}$ and is maximally entangled with Schmidt rank $d_{\lambda}=\dim V_{\lambda}$.

Alice and Bob locally measure their isotypic labels. The global invariant constraint makes the outcomes perfectly correlated as $(\lambda,\bar\lambda)$, and the measurement removes any coherence between distinct representation sectors. Conditioned on $\lambda$, discarding the multiplicity spaces leaves $|\Phi_{\lambda}\rangle$. Thus every state supported in the invariant subspace obeys
\begin{align}
 \ED(\rho_{AB})&\geq
 \sum_{\lambda}p_{\lambda}\log_2d_{\lambda},
 \label{eq:general-distill}\\
 p_{\lambda}&=\Tr[(P^A_{\lambda}\otimes P^B_{\bar\lambda})\rho_{AB}].
 \nonumber
\end{align}
The protocol consists of representation measurements local to each party, classical communication and local discarding, and is therefore an LOCC distillation protocol~\cite{Horodecki2009}. The symmetry determines the sector-resolved thermal state; the extraction itself uses ordinary local quantum operations.

For $SU(2)$, the irreducible labels are spins $j$ and $d_j=2j+1$. We define the representation-space yield
\begin{equation}
 Y_N=\sum_jp_j\log_2(2j+1),
 \qquad \ED(\rho_{AB})\geq Y_N.
 \label{eq:yield}
\end{equation}
Equation~\eqref{eq:yield} separates the problem into two components: a universal information-theoretic conversion from representation dimension to ebits, and a thermodynamic problem that determines the distribution $p_j$. Fig.~\ref{fig:mechanism} summarizes the conversion. Local thermal fluctuations create a half-chain spin width of order $\sqrt N$; the global singlet constraint locks the same representation across the bipartition; and the non-Abelian dimension $2j+1$ converts that width into a logarithmic distillation yield.

\begin{figure*}[t]
 \centering
 \includegraphics[width=\textwidth]{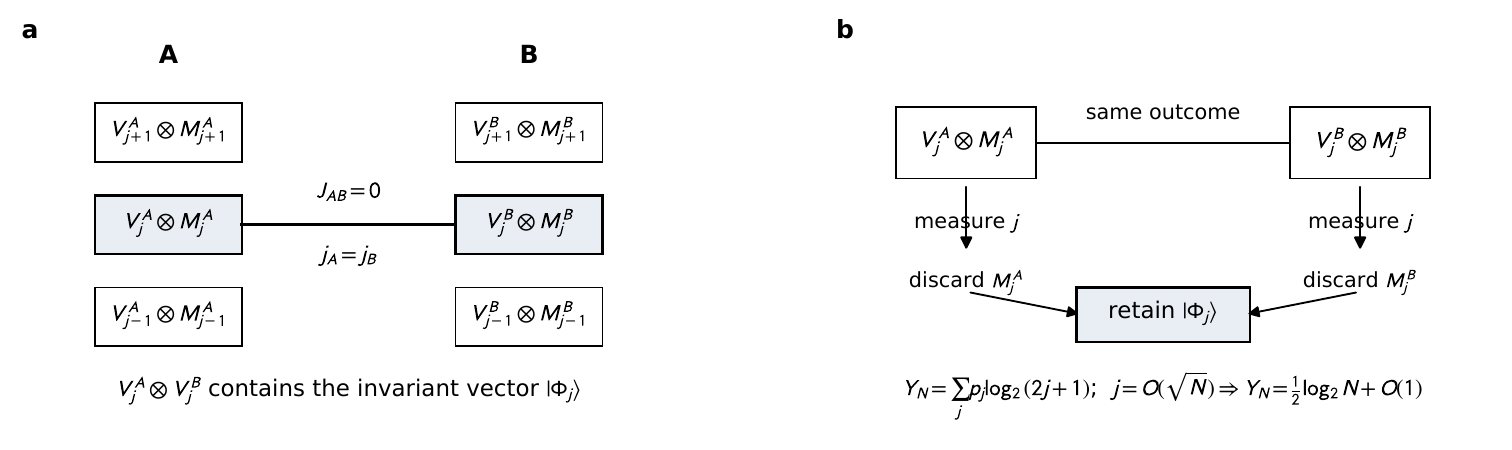}
 \caption{\textbf{Representation-space distillation.} \textbf{a}, The global singlet constraint correlates equal local $SU(2)$ representation labels across the bipartition. For each outcome $j$, the representation factors $V_j^A$ and $V_j^B$ contain the unique invariant vector $|\Phi_j\rangle$. \textbf{b}, Alice and Bob each measure $j$ on their half-chain and discard the multiplicity factors $M_j^A$ and $M_j^B$, retaining $|\Phi_j\rangle$ with $\log_2(2j+1)$ ebits. Thermal fluctuations with $j=O(\sqrt N)$ therefore give a yield of order $\frac12\log_2N$.}
 \label{fig:mechanism}
\end{figure*}

\subsection{Heating an unentangled ground state into a divergent resource}\label{sec:dimer-activation}

An exactly solvable dimer chain provides an explicit realization of the effect. We prepare the singlet ground state and then let a detailed-balance dynamics with strong $SU(2)$ symmetry equilibrate within the total-spin-zero sector fixed by that initial state. Take $N=4L$ spin-$1/2$ degrees of freedom and pair them into $2L$ disjoint dimers, with $L$ dimers entirely inside each half. The Hamiltonian is
\begin{equation}
 H_{\mathrm{dim}}=\Delta\sum_{r=1}^{2L}P_r^{(1)},
 \label{eq:dimer-H}
\end{equation}
where $P_r^{(1)}$ projects dimer $r$ onto its triplet. The unique ground state is a product of dimer singlets. Because no dimer crosses $A|B$, the ground state factorizes across the bipartition and
\begin{equation}
 \ED(T=0)=0.
 \label{eq:dimer-zero}
\end{equation}

At positive temperature, put $y=e^{-\beta\Delta}$. Each dimer contributes $V_0\oplus yV_1$, so the half-chain representation content is
\begin{equation}
 (V_0\oplus yV_1)^{\otimes L}
 =\bigoplus_ja_{L,j}(y)V_j,
 \label{eq:dimer-rep}
\end{equation}
with coefficients
\begin{equation}
 a_{L,j}(y)=\int_{SU(2)}dg\,\chi_j(g)^*[1+y\chi_1(g)]^L.
 \label{eq:dimer-character}
\end{equation}
After projection onto global spin zero, the state is locally flagged by $j$:
\begin{align}
 \rho_{\beta,0}&=\bigoplus_jp_j
 |\Phi_j\rangle\!\langle\Phi_j|\otimes
 \tau^A_{L,j}\otimes\tau^B_{L,j},
 \label{eq:dimer-flagged}\\
 p_j&=\frac{a_{L,j}(y)^2}{\sum_ka_{L,k}(y)^2}.
 \nonumber
\end{align}
The multiplicity states factor across the cut. The local $j$ measurement therefore attains $Y_N$, while a product-state decomposition of the multiplicity factors upper-bounds the entanglement of formation by the same quantity. Consequently,
\begin{equation}
 \ED=\EF=Y_N
 \label{eq:dimer-equality}
\end{equation}
for every finite $L$ and every temperature.

For every fixed $T>0$, a two-center group saddle gives
\begin{align}
 \ED={}&\frac12\log_2L
 +\frac12\log_2[2\kappa(\beta)]\nonumber\\
 &+\frac{1-\gamma/2}{\ln2}+o(1),
 \label{eq:dimer-fixedT}\\
 \kappa(\beta)={}&\frac{e^{-\beta\Delta}}{1+3e^{-\beta\Delta}}.
 \nonumber
\end{align}
Here $\gamma$ is Euler's constant. Thus an arbitrarily low but fixed positive temperature produces an unbounded resource in a sufficiently large chain, even though the zero-temperature state is unentangled across the same cut. The limits do not commute:
\begin{equation}
 \lim_{T\downarrow0}\lim_{N\to\infty}\ED=\infty,
 \qquad
 \lim_{N\to\infty}\lim_{T\downarrow0}\ED=0.
 \label{eq:noncommuting}
\end{equation}
Fig.~\ref{fig:dimer}a and c show the activation and the approach to the fixed-temperature asymptotics.

\subsection{Exact crossover from dilute activation to logarithmic entanglement}\label{sec:crossover}

The low-temperature onset is governed by the triplet activity
\begin{equation}
 x=Le^{-\Delta/T}.
 \label{eq:xdef}
\end{equation}
We take the double-scaling limit $L\to\infty$, $T\to0$ with $x$ fixed. Since
\begin{equation}
 \left[1+\frac{x}{L}\chi_1(g)\right]^L\longrightarrow e^{x\chi_1(g)},
 \label{eq:poisson-limit}
\end{equation}
the limiting character coefficient is
\begin{align}
 A_j(x)&=\int_{SU(2)}dg\,\chi_j(g)e^{x\chi_1(g)}
 \label{eq:Aj}\\
 &=e^x[I_j(2x)-I_{j+1}(2x)].
 \nonumber
\end{align}
Here $I_j$ is the modified Bessel function. Parseval's identity gives the normalized distribution
\begin{equation}
 p_j(x)=\frac{[I_j(2x)-I_{j+1}(2x)]^2}
 {I_0(4x)-I_1(4x)}.
 \label{eq:pjx}
\end{equation}
The complete crossover function is therefore
\begin{align}
 \cF(x)&=\sum_{j=0}^{\infty}p_j(x)\log_2(2j+1),
 \label{eq:crossover}\\
 \lim_{L\to\infty}Y_L(x/L)&=\cF(x).
 \nonumber
\end{align}
The convergence is locally uniform in $x$, and the scaling function is completely determined by the activity $x$.

The two asymptotic regimes expose the activation mechanism:
\begin{align}
 \cF(x)&=x^2\log_2 3+O(x^3), &&x\downarrow0,
 \label{eq:smallx}\\
 \cF(x)&=\frac12\log_2(2x)
 +\frac{1-\gamma/2}{\ln2}+o(1), &&x\to\infty.
 \label{eq:largex}
\end{align}
The quadratic small-$x$ law follows because the global singlet projection requires matching triplet representations in both halves; a single activated triplet cannot form a global singlet across the bipartition. The large-$x$ law joins continuously onto the fixed-temperature result in Eq.~\eqref{eq:dimer-fixedT}. The crossover to an order-one resource occurs at $x=O(1)$, giving
\begin{equation}
 \frac{\Delta}{T_*(L)}=\ln L+O(1),
 \qquad
 T_*(N)=\frac{\Delta}{\ln N+O(1)}.
 \label{eq:Tstar}
\end{equation}
Fig.~\ref{fig:dimer}b shows finite-size curves converging to the exact Bessel-function limit.

\begin{figure*}[t]
 \centering
 \includegraphics[width=\textwidth]{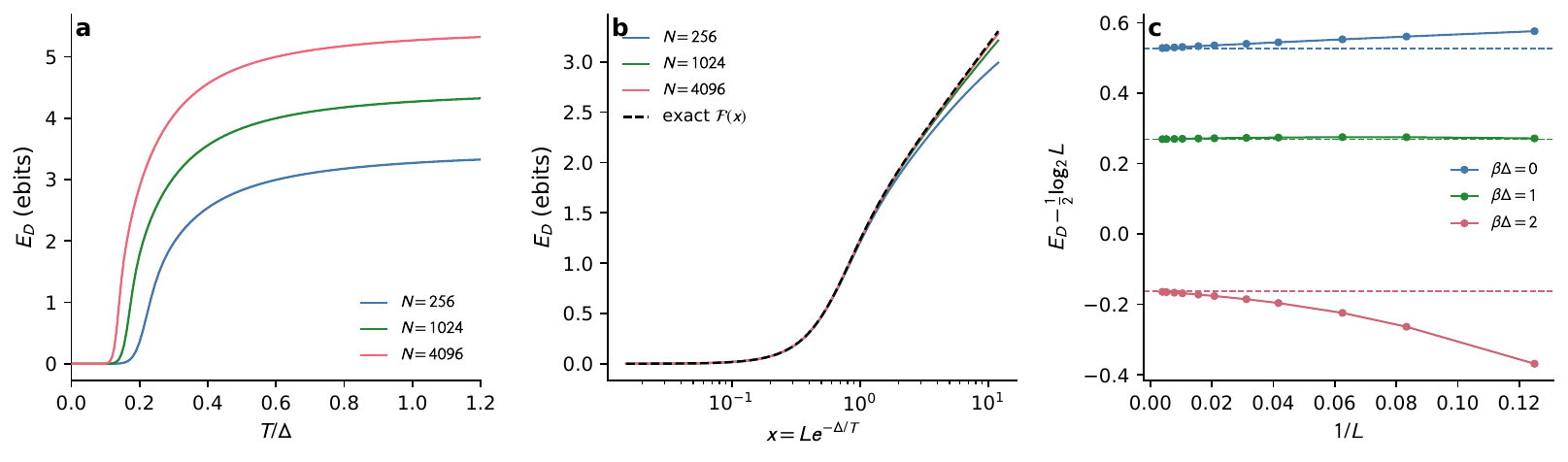}
 \caption{\textbf{Exact thermal activation and crossover in the dimer chain.} \textbf{a}, Distillable entanglement versus temperature for three system sizes. The curves begin at $E_{\mathrm D}=0$ because no dimer crosses the central bipartition, and rise with temperature toward a resource that grows with system size. \textbf{b}, Finite-size results converge to the exact crossover $\mathcal F(x)$ in Eq.~\eqref{eq:crossover}, shown as a black dashed line. \textbf{c}, The residual $E_{\mathrm D}-\frac12\log_2L$ approaches the analytic constants in Eq.~\eqref{eq:dimer-fixedT}; dashed horizontal lines show the asymptotic values.}
 \label{fig:dimer}
\end{figure*}

\subsection{Universal logarithmic entanglement in local spin chains}\label{sec:universal}

We now turn from the exactly solvable model to a broad class of local chains. Let $H_N$ be the restriction to $N$ sites of a fixed finite-period interaction on identical spins transforming in a nontrivial irreducible representation of $SU(2)$. After grouping one period into a unit cell, assume that every local term is $SU(2)$ invariant, that the interaction has finite range and a uniform local-norm bound, and that the finite chains are reflection symmetric about the central cut. The thermal state constrained to the global-singlet sector is
\begin{equation}
 \rho_{\beta,0}=\frac{P_0e^{-\beta H_N}}{Z_{\beta,0}},
 \qquad
 Z_{\beta,0}=\Tr(P_0e^{-\beta H_N}),
 \label{eq:singlet-gibbs}
\end{equation}
where $P_0$ projects onto total spin zero. Such a state is stationary under any symmetry-preserving detailed-balance dynamics that is initialized in and ergodic within the singlet sector.

Theorem~\ref{thm:main} shows that a finite-temperature local chain confined to the singlet sector contains a standard, explicitly extractable resource whose size diverges with the chain length. At $\beta=0$, it reduces to the logarithmic representation-space resource of the maximally mixed invariant sector~\cite{Moharramipour2024}. The finite-temperature advance is that arbitrary bounded finite-range interactions may reweight all multiplicity sectors, yet throughout a nonzero temperature interval they leave the leading coefficient unchanged. The universal coefficient $1/2$ follows from the three-dimensional $SU(2)$ fluctuation cloud and the growth $d_j=2j+1$ of its irreducible representations, independently of microscopic couplings.

The proof separates bulk fluctuations from the finite set of interactions crossing the bipartition. First delete those cut interactions, writing $H^{(0)}=H_A+H_B$. Reflection symmetry makes the two half-chain Hamiltonians unitarily equivalent. By Schur's lemma,
\begin{align}
 H_A&=\bigoplus_j\id_{V_j}\otimes K_{n,j},
 \label{eq:schur-H}\\
 a_{n,j}&=\Tr_{M_j}e^{-\beta K_{n,j}},
 \qquad n=N/2.
 \nonumber
\end{align}
Projection of the decoupled Gibbs operator onto the global singlet gives
\begin{equation}
 p_j^{(0)}=\frac{a_{n,j}^2}{\sum_ka_{n,k}^2}.
 \label{eq:pj0-results}
\end{equation}
A normalized $SU(2)$-twisted partition function,
\begin{equation}
 f_{n,\beta}(g)=\frac{\Tr(e^{-\beta H_A}U_n(g))}{\Tr e^{-\beta H_A}}
 =\frac{1}{Z_A}\sum_ja_{n,j}\chi_j(g),
 \label{eq:twist-results}
\end{equation}
encodes this distribution. A uniform high-temperature cluster expansion produces Gaussian saddles around the two central elements of $SU(2)$ and exponential suppression away from them. Character orthogonality and the group Laplacian then yield
\begin{align}
 \frac{\sum_ja_{n,j}^2}{Z_A^2}&\asymp n^{-3/2},
 \label{eq:norm-scaling}\\
 \frac{a_{n,j}}{Z_A}&\leq C(2j+1)n^{-3/2},
 \label{eq:coef-bound}\\
 \sum_jp_j^{(0)}j(j+1)&\leq Cn.
 \label{eq:casimir-bound}
\end{align}
The first two estimates give a cubic lower-tail bound
\begin{equation}
 \Pr_{p^{(0)}}(2j+1\leq x\sqrt n)\leq Cx^3,
 \qquad 0<x\leq1,
 \label{eq:low-tail-results}
\end{equation}
whereas Eq.~\eqref{eq:casimir-bound} controls the upper tail. Together they imply
\begin{equation}
 \sum_jp_j^{(0)}\log_2(2j+1)
 =\frac12\log_2n+O_{\beta}(1).
 \label{eq:decoupled-yield}
\end{equation}

The second step restores the cut interaction. Quantum belief propagation expresses the coupled Gibbs operator as a bounded, invertible dressing of the decoupled operator,
\begin{equation}
 e^{-\beta(H^{(0)}+V_{\partial})}
 =\eta e^{-\beta H^{(0)}}\eta^{\dagger}.
 \label{eq:qbp-results}
\end{equation}
The dressing can be replaced by an exactly $SU(2)$-invariant operator $\eta_R$ supported within distance $R$ of the cut, with exponentially small error. Such an operator changes the half-chain spin by at most $O(R)$. Choosing $R=q\log N$ makes the truncation error negligible while $R=o(\sqrt N)$. The cubic low-spin tail and the linear Casimir moment therefore survive the cut-local perturbation, preserving Eq.~\eqref{eq:decoupled-yield} and proving Theorem~\ref{thm:main}.

\subsection{Locality protects the thermal representation fluctuations}\label{sec:locality}

The proof also identifies locality as a physical component of the mechanism. The logarithmic resource is generated by the extensive bulk: at positive temperature, many local degrees of freedom contribute incoherently to the half-chain charge, producing a representation distribution of width $O(\sqrt N)$. The interactions that connect the two halves occupy only a fixed neighborhood of the cut. Their Gibbs dressing spreads quasi-locally, reaching only $O(\log N)$ sites at the accuracy needed for the thermodynamic estimate. A boundary operation with this support cannot reorganize a distribution whose intrinsic width is $O(\sqrt N)$.

Locality is essential to the mechanism. Nonlocal penalties such as $\bm S_A^2+\bm S_B^2$ act directly on the collective half-chain spin and can confine $j$ to $O(1)$, whereas local interactions retain the thermodynamic representation fluctuations that generate the logarithmic yield. The strong non-Abelian constraint correlates the two fluctuation clouds and converts their representation dimension into ebits.

\subsection{Finite-size verification in a frustrated spin chain}\label{sec:ed-results}

To illustrate these predictions away from the solvable limits, we examined the open $J_1$--$J_2$ spin-$1/2$ chain
\begin{equation}
 H=J_1\sum_i\bm S_i\cdot\bm S_{i+1}
 +J_2\sum_i\bm S_i\cdot\bm S_{i+2}.
 \label{eq:J1J2}
\end{equation}
We set $J_2/J_1=0.37$, a generic frustrated coupling. We diagonalized directly in the global-singlet sector for $N\leq16$ and thermally averaged the half-chain spin projectors. The theorem fixes the asymptotic coefficient; these system sizes test the corresponding finite-size signatures away from the solvable dimer limit.

Fig.~\ref{fig:ed}a shows that $Y_N$ grows along both half-chain parity subsequences. More directly, Fig.~\ref{fig:ed}b plots the residual $Y_N-\frac12\log_2N$, which remains within an order-one range but displays parity-dependent finite-size drift over the accessible sizes. Fig.~\ref{fig:ed}c shows cumulative distributions of $j/\sqrt N$ at $\beta J_1=0.5$; the distributions remain on an order-one horizontal scale as $N$ increases, consistent with the predicted $\sqrt N$ width. These data provide finite-size evidence for the representation-fluctuation mechanism away from the decoupled and exactly solvable limits.

\begin{figure*}[t]
 \centering
 \includegraphics[width=\textwidth]{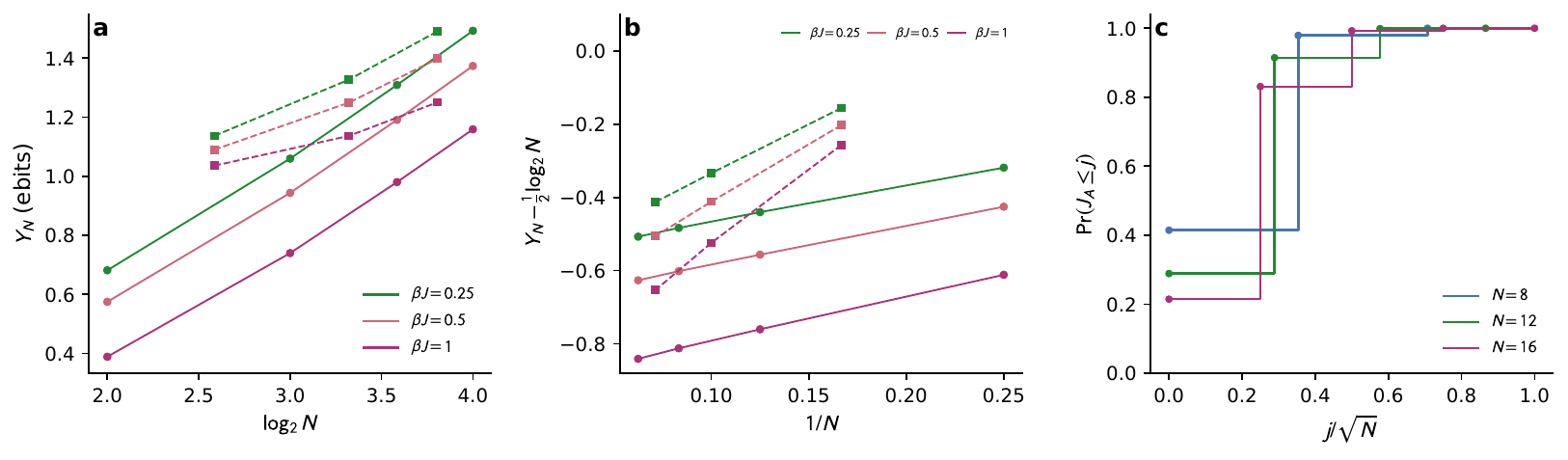}
 \caption{\textbf{Finite-size signatures in a frustrated spin chain.} Singlet-sector exact diagonalization of the open $J_1$--$J_2$ chain at $J_2/J_1=0.37$. \textbf{a}, Representation-space yield $Y_N$ versus $\log_2N$. Solid circles connect $N=4m$, and dashed squares connect $N=4m+2$. \textbf{b}, The residual $Y_N-\frac12\log_2N$ remains of order one over the accessible sizes and displays parity-dependent finite-size drift around the theorem-fixed leading behavior. \textbf{c}, At $\beta J_1=0.5$, cumulative distributions of $j/\sqrt N$ exhibit the predicted order-one rescaled width. Connecting lines are guides to the eye.}
 \label{fig:ed}
\end{figure*}

\section{Discussion}\label{sec:discussion}

Sector-preserving heating can therefore create entanglement across a macroscopic bipartition from a ground state that is exactly unentangled across that cut. The generated resource is operational: local representation measurements distill a number of ebits that grows without bound with system size. In the solvable dimer chain the activation occurs for every fixed positive temperature and is described by an exact crossover function; in the general local class Theorem~\ref{thm:main} fixes the leading high-temperature growth to $\frac12\log_2N$.

The result builds on several complementary strands of symmetry-constrained entanglement: non-Abelian charges can enhance typical entanglement~\cite{Majidy2023}, symmetry-resolved formulations isolate representation sectors~\cite{Bianchi2024}, maximally mixed invariant sectors contain exactly distillable representation-space entanglement~\cite{Livine2005,Moharramipour2024}, and strong symmetries generate highly entangled stationary states~\cite{Li2025}. We extend this picture in two directions. Reweighting by an arbitrary uniformly bounded finite-range Hamiltonian preserves the logarithmic coefficient throughout a nonzero temperature interval, while an explicit local model turns temperature into an activation parameter: the resource begins at exactly zero, follows a universal dilute-to-thermodynamic crossover and diverges for every fixed $T>0$.

The mechanism separates into three ingredients. Strong symmetry fixes the global charge sector, locality produces $\sqrt N$-scale subsystem charge fluctuations, and non-Abelian representation theory converts those fluctuations into a growing Schmidt rank. The last step distinguishes this mechanism from the high-temperature entanglement protected by Abelian strong symmetry~\cite{Garratt2026,Negari2025}: one-dimensional Abelian irreducible representations do not supply the representation-dimension contribution in Eq.~\eqref{eq:general-distill}.

The theorem assumes exact local $SU(2)$ invariance, preparation and thermalization within the global-singlet sector, a uniformly bounded finite-range interaction and reflection symmetry about the equal bipartition; its general conclusion is established in a high-temperature interval, whereas the exact dimer result holds at all positive temperatures. These assumptions identify concrete extensions: higher-rank compact groups, controlled symmetry breaking and imperfect sector preparation, restricted local control, and protocols that resolve non-Abelian charges without a full Schur transform. The present construction establishes that thermal fluctuations need not merely preserve constrained entanglement---they can generate an unbounded distillable resource from a cut-unentangled state.

\section{Methods}\label{sec:methods}

\subsection{Strong-symmetry thermal ensemble}\label{meth:ensemble}

We set Boltzmann's constant to one, so $\beta=1/T$. We consider finite chains with Hilbert space $\cH_N=(V_s)^{\otimes N}$, where $V_s$ is a fixed nontrivial spin-$s$ representation. The Hamiltonian is a finite-range interaction
\begin{equation}
 H_N=\sum_{X\subset[1,N]}\Phi_X,
 \label{eq:local-H}
\end{equation}
with a fixed finite-period unit cell and a uniform bound on the local interaction norm. Writing $U_N(g)=U_s(g)^{\otimes N}$, exact local $SU(2)$ invariance means $[\Phi_X,U_N(g)]=0$ for every interaction term. The central bipartition contains $n=N/2$ sites in each half, and reflection symmetry about the cut identifies the thermal multiplicity weights of the two halves. The global-singlet canonical state is Eq.~\eqref{eq:singlet-gibbs}.

A dynamical realization is a quantum Markov process with strong $SU(2)$ symmetry in the sense that its Hamiltonian and jump operators commute with the global group action~\cite{Buca2012}. If its restriction to the singlet sector is primitive and satisfies detailed balance with $\rho_{\beta,0}$, then an initial state in that sector converges to $\rho_{\beta,0}$. Our results concern the sector-resolved thermal state itself and therefore apply to any thermalizing generator with these symmetry, ergodicity and detailed-balance properties. The theorem is restricted to system sizes for which the singlet sector is nonempty.

\subsection{Representation-space distillation protocol}\label{meth:distill}

Let $P^A_j$ and $P^B_k$ be the projectors onto the local isotypic components in Eq.~\eqref{eq:general-decomp}. For a state $\rho=P_0\rho P_0$, Alice and Bob apply the local projective measurements $\{P^A_j\}$ and $\{P^B_k\}$. The invariant-subspace decomposition in Eq.~\eqref{eq:inv-decomp} implies that only equal outcomes $j=k$ occur. This remains true when $\rho$ contains coherences between different $j$ sectors, because the measurement removes them.

Conditioned on $j$, the normalized post-measurement state has the form
\begin{equation}
 |\Phi_j\rangle\!\langle\Phi_j|\otimes\sigma_j,
 \label{eq:conditional-state}
\end{equation}
where $\sigma_j$ is an arbitrary state on $M_j^A\otimes M_j^B$. Alice and Bob trace out their multiplicity systems and retain $|\Phi_j\rangle$, which contains $\log_2(2j+1)$ ebits. Repeating the protocol over many copies and classically sorting the outcomes yields the asymptotic average rate in Eq.~\eqref{eq:yield}. Appendix~\ref{app:distillation} gives the compact-group statement and proof.

\subsection{Representation distribution for decoupled halves}\label{meth:decoupled}

Remove all terms whose support intersects both halves and write $H^{(0)}=H_A+H_B$. Since $H_A$ commutes with the half-chain $SU(2)$ action,
\begin{equation}
 e^{-\beta H_A}=\bigoplus_j\id_{V_j}\otimes e^{-\beta K_{n,j}}.
 \label{eq:thermal-schur}
\end{equation}
We define the nonnegative multiplicity weights $a_{n,j}=\Tr_{M_j}e^{-\beta K_{n,j}}$. Reflection symmetry makes the corresponding weights on $B$ identical. Projecting $e^{-\beta H_A}\otimes e^{-\beta H_B}$ onto the global singlet retains one invariant vector in $V_j^A\otimes V_j^B$ and gives Eq.~\eqref{eq:pj0-results}. This identity is exact at finite size. Appendix~\ref{app:pressure} gives the projection identity and the associated character expansion.

\subsection{Twisted partition function and analytic pressure}\label{meth:twist}

Let $U_n(g)=U_s(g)^{\otimes n}$ and define
\begin{align}
 F_{n,\beta}(g)&=\Tr[e^{-\beta H_A}U_n(g)],
 \label{eq:Fdef}\\
 f_{n,\beta}(g)&=F_{n,\beta}(g)/Z_A.
 \nonumber
\end{align}
Its character expansion is $F_{n,\beta}(g)=\sum_ja_{n,j}\chi_j(g)$. In exponential coordinates $g=e^{i\bm X\cdot\bm S}$ around the identity, local $SU(2)$ invariance allows the twist to be written inside the exponential:
\begin{align}
 F_{n,\beta}(e^{i\bm X\cdot\bm S})
 &=\Tr\exp[-\beta H_A+i\bm X\cdot\bm M_n],
 \label{eq:complex-twist}\\
 \bm M_n&=\sum_{x=1}^n\bm S_x.
 \nonumber
\end{align}
High-temperature cluster expansions for finite-range quantum interactions~\cite{Netocny2004,Nguyen2024} extend to this bounded complex one-site source. The uniform version required here is established in Appendix~\ref{app:pressure}. Consequently,
\begin{equation}
 \log f_{n,\beta}(e^{i\bm X\cdot\bm S})
 =n\phi_{\beta}(\bm X)+b_{n,\beta}(\bm X),
 \label{eq:pressure}
\end{equation}
where $\phi_{\beta}$ is an analytic bulk pressure and $b_{n,\beta}$ is an analytic boundary term whose derivatives through third order are bounded uniformly in $n$. Rotational invariance forces the Hessian at the origin to be isotropic,
\begin{equation}
 -\partial_a\partial_b\phi_{\beta}(0)=\sigma_{\beta}^2\delta_{ab},
 \qquad \sigma_{\beta}^2>0,
 \label{eq:susceptibility}
\end{equation}
through a nonzero high-temperature interval. The same expansion applies around the second central element. Away from fixed neighborhoods of the two centers, a support-cover estimate gives $|f_{n,\beta}(g)|\leq Ce^{-cn}$. Appendices~\ref{app:pressure} and~\ref{app:twisted-bounds} prove the uniform complex-twist pressure statement and derive the global Gaussian and away-from-center bounds.

\subsection{Character bounds and logarithmic yield}\label{meth:characters}

Character orthogonality gives
\begin{equation}
 \frac{\sum_ja_{n,j}^2}{Z_A^2}
 =\int_{SU(2)}dg\,|f_{n,\beta}(g)|^2.
 \label{eq:parseval}
\end{equation}
The two Gaussian saddles in three group dimensions produce the $n^{-3/2}$ normalization in Eq.~\eqref{eq:norm-scaling}. Fourier inversion of the local Gaussian gives the coefficient bound in Eq.~\eqref{eq:coef-bound}. Applying the group Laplacian and using $-\Delta_G\chi_j=j(j+1)\chi_j$ yields
\begin{equation}
 \sum_j j(j+1)\frac{a_{n,j}^2}{Z_A^2}
 =\int dg\,\|\nabla f_{n,\beta}(g)\|^2
 =O(n^{-1/2}),
 \label{eq:laplacian}
\end{equation}
which becomes Eq.~\eqref{eq:casimir-bound} after normalization.

Summing the squared coefficient bound for $2j+1\leq x\sqrt n$ gives Eq.~\eqref{eq:low-tail-results}. Tail integration shows that the negative part of
\begin{equation}
 \log\frac{2j+1}{\sqrt n}
 \label{eq:scaled-log}
\end{equation}
is uniformly integrable. The Casimir moment controls the positive tail. The expectation of Eq.~\eqref{eq:scaled-log} is therefore $O(1)$, proving Eq.~\eqref{eq:decoupled-yield}. Appendix~\ref{app:characters} gives the estimates explicitly.

\subsection{Stability under the cut interaction}\label{meth:qbp}

Let $H(s)=H^{(0)}+sV_{\partial}$ for $s\in[0,1]$, and denote the normalized singlet-sector Gibbs states at $s=0$ and $s=1$ by $\rho_0$ and $\rho$, respectively. Quantum belief propagation gives an ordered exponential $\eta$ satisfying Eq.~\eqref{eq:qbp-results}, with $\|\eta\|$ and $\|\eta^{-1}\|$ bounded independently of $N$~\cite{Hastings2007,Capel2025}. The generator has exponentially decaying tails away from the support of $V_{\partial}$. Truncating it at distance $R$ gives $\eta_R$ with
\begin{equation}
 \|\eta-\eta_R\|\leq Ce^{-\mu R}.
 \label{eq:qbp-trunc}
\end{equation}
Averaging $\eta_R$ over the global $SU(2)$ action preserves its support and error while making it exactly symmetric.

An operator supported on $O(R)$ sites adjacent to the cut transforms as a direct sum of spins no larger than $cR$. The Wigner--Eckart selection rule therefore makes it banded in the half-chain spin:
\begin{equation}
 P^A_j\eta_RP^A_k=0
 \quad\text{when}\quad |j-k|>cR.
 \label{eq:banded}
\end{equation}
The bounded invertibility of $\eta_R$ compares the normalization of the coupled and decoupled singlet states. Equation~\eqref{eq:banded} transfers the decoupled low-spin probability to a threshold shifted by $cR$. A corresponding banded-operator estimate transfers the Casimir moment,
\begin{align}
 \Tr[(2J_A+1)^2\rho]
 &\leq C\big\{\Tr[(2J_A+1)^2\rho_0]+R^2
 \nonumber\\
 &\hspace{4.7em}+N^2e^{-\mu R}\big\}.
 \label{eq:moment-transfer}
\end{align}
Choosing $R=q\log N$ with sufficiently large $q$ makes the final term negligible and keeps $R/\sqrt N\to0$. The logarithmic-yield proof then applies to the coupled state. Appendix~\ref{app:stability} supplies the normalization, tail and moment inequalities.

\subsection{Exact dimer decomposition and fixed-temperature asymptotics}\label{meth:dimer}

For one dimer, the thermal representation polynomial is $V_0\oplus yV_1$. Tensoring $L$ copies gives Eq.~\eqref{eq:dimer-rep}; the coefficients also obey the recursion
\begin{align}
 a^{(\ell+1)}_0&=a^{(\ell)}_0+y a^{(\ell)}_1,\nonumber\\
 a^{(\ell+1)}_j&=a^{(\ell)}_j
 +y\big(a^{(\ell)}_{j-1}+a^{(\ell)}_j+a^{(\ell)}_{j+1}\big),
 \qquad j\geq1,
 \label{eq:dimer-recursion}
\end{align}
with $a^{(0)}_0=1$. This stable recursion generates the finite-size data in Fig.~\ref{fig:dimer}.

For the analytic limit, normalize the one-dimer character as
\begin{equation}
 q_y(g)=\frac{1+y\chi_1(g)}{1+3y}.
 \label{eq:qy}
\end{equation}
Its modulus reaches one only at the two central elements. In exponential coordinates,
\begin{equation}
 q_y(e^X)=1-\kappa\|X\|^2+O(\|X\|^4),
 \qquad \kappa=\frac{y}{1+3y}.
 \label{eq:q-saddle}
\end{equation}
The character coefficient has the uniform saddle form, for $j=O(\sqrt L)$,
\begin{align}
 \frac{a_{L,j}}{(1+3y)^L}
 & =C_yL^{-3/2}(2j+1)
 \label{eq:dimer-saddle}\\
 &\quad\times\exp\left[-\frac{j(j+1)}{4\kappa L}\right][1+o(1)].
 \nonumber
\end{align}
After squaring and normalizing, the variable
\begin{equation}
 U_L=\frac{2j+1}{\sqrt{8\kappa L}}
 \label{eq:UL}
\end{equation}
converges to the density $h(u)=4u^2e^{-u^2}/\sqrt\pi$. The coefficient and Casimir bounds make $\log U_L$ uniformly integrable. Since $\mathbb E_h\log U=\frac12\psi(3/2)=1-\gamma/2-\ln2$, Eq.~\eqref{eq:dimer-fixedT} follows. Appendix~\ref{app:dimer} gives the finite-size equality and the tail estimates.

\subsection{Dilute-activity crossover}\label{meth:crossover}

Let $\nu_{m,j}$ be the multiplicity of $V_j$ in $V_1^{\otimes m}$. Expanding the finite-$L$ coefficient at $y=x/L$ gives
\begin{equation}
 a_{L,j}(x/L)=\sum_{m=0}^{L}\binom{L}{m}(x/L)^m\nu_{m,j}.
 \label{eq:binomial-cross}
\end{equation}
For fixed $m$, the binomial coefficient converges to $x^m/m!$. Since $\nu_{m,j}=0$ for $j>m$ and $\nu_{m,j}\leq3^m$, the factorial tail provides a summable bound even after multiplication by $1+\log(2j+1)$. Dominated convergence proves Eq.~\eqref{eq:crossover} locally uniformly in $x$.

For conjugacy angle $\theta$,
\begin{align}
 dg&=\pi^{-1}\sin^2(\theta/2)d\theta,
 \label{eq:class-coordinates}\\
 \chi_j(\theta)&=\frac{\sin[(2j+1)\theta/2]}{\sin(\theta/2)},
 \nonumber\\
 \chi_1(\theta)&=1+2\cos\theta.
 \nonumber
\end{align}
Substitution into Eq.~\eqref{eq:Aj} gives the Bessel difference. Parseval gives
\begin{equation}
 \sum_jA_j(x)^2=e^{2x}[I_0(4x)-I_1(4x)],
 \label{eq:bessel-normalization}
\end{equation}
which yields Eq.~\eqref{eq:pjx}. Taylor expansion gives Eq.~\eqref{eq:smallx}; the same two-center saddle used above, with $x$ replacing $\kappa L$, gives Eq.~\eqref{eq:largex}. Appendix~\ref{app:crossover} contains the full convergence and asymptotic proof.

\subsection{Exact diagonalization}\label{meth:ed}

We used the open Hamiltonian in Eq.~\eqref{eq:J1J2}. For $N\leq14$, we constructed the $S^z=0$ computational basis, diagonalized total $\bm S^2$ and retained its zero eigenspace. For $N=16$, we used the linearly independent noncrossing valence-bond basis and solved the Hamiltonian and $\bm S_A^2$ as generalized eigenvalue problems with the valence-bond Gram matrix. In each energy eigenstate in the singlet sector, we evaluated the spectral projectors of the half-chain Casimir $\bm S_A^2$ and then thermally averaged them to obtain $p_j$. The yield was computed from Eq.~\eqref{eq:yield}.

The distributed data were cross-validated in two independent representations for $N\leq14$. At $\beta=0$, the complete $p_j$ distribution through $N=16$ was additionally checked against the exact spin-$1/2$ multiplicity formula. The two half-chain parities were retained separately because they have distinct finite-size corrections. Appendix~\ref{app:ed} gives the basis construction and numerical validation details.

\section*{Data availability}
All processed numerical data supporting the figures are included as ancillary files with the arXiv record. The analytical curves are generated directly from the equations and procedures reported in the manuscript and Supplemental Material.

\section*{Code availability}
Python code that reproduces all three main-text figures in high-resolution PNG and vector-PDF formats from the analytical expressions and distributed numerical tables is included as ancillary files with the arXiv record. The exact-diagonalization code used to generate the processed numerical tables is available from the author upon reasonable request.

\section*{Acknowledgements}
This work was supported by the CPS-Yangtze Delta Region Industrial Innovation Center of Quantum and Information Technology-MindSpore Quantum Open Fund. The funder had no role in the study design, data generation or analysis, interpretation of the results, or preparation of the manuscript.

\section*{Author contributions}
S.Z. conceived the project, developed the analytical theory, performed the numerical calculations, wrote the code and prepared the manuscript.

\section*{Use of large language models}
Large language models were used for language editing and formatting assistance. The author independently verified all mathematical statements, proofs, numerical results and references and is fully responsible for the manuscript.

\section*{Competing interests}
The author declares no financial or non-financial competing interests.

\bibliographystyle{quantum}
\bibliography{references}

\clearpage
\onecolumn

\begin{center}
{\Large\bfseries Supplemental Material for\\[0.35em]
``Thermal Activation of Divergent Distillable Entanglement under Non-Abelian Strong Symmetry''\par}
\vspace{0.8em}
{\large Shuai Zeng\par}
\vspace{0.25em}
{\normalsize Chongqing University of Posts and Telecommunications, Chongqing 400065, China\par}
\end{center}
\vspace{1em}

\appendix
\numberwithin{equation}{section}
\setcounter{equation}{0}

This Supplemental Material provides the extended proofs and numerical construction supporting the main text. The main text states the physical models, operational protocol, principal estimates and computational methods; the derivations below follow the order of the corresponding Methods subsections. Logarithms are base two when they quantify entanglement and natural otherwise.

\section{Compact-group representation-space distillation}\label{app:distillation}

Let a compact group $G$ act on the two parties as
\begin{equation}
 \cH_A=\bigoplus_\lambda V_\lambda\otimes M^A_\lambda,
 \qquad
 \cH_B=\bigoplus_\mu V_\mu\otimes M^B_\mu.
 \label{S:eq:repdec}
\end{equation}
Schur's lemma gives
\begin{equation}
 \Inv(V_\lambda\otimes V_\mu)=0\quad(\mu\neq\bar\lambda),
 \qquad
 \dim\Inv(V_\lambda\otimes V_{\bar\lambda})=1.
\end{equation}
Denote the normalized invariant vector by $|\Phi_\lambda\rangle$.  In dual bases it is maximally entangled with Schmidt rank $d_\lambda=\dim V_\lambda$.  Hence
\begin{equation}
 \Inv(\cH_A\otimes\cH_B)=
 \bigoplus_\lambda \operatorname{span}\{|\Phi_\lambda\rangle\}
 \otimes M^A_\lambda\otimes M^B_{\bar\lambda}.
 \label{S:eq:invdec}
\end{equation}

\textbf{Lemma A.1.}  If $\rho_{AB}=P_{\rm inv}\rho_{AB}P_{\rm inv}$, then
\begin{equation}
 \ED(\rho_{AB})\geq\sum_\lambda p_\lambda\log_2d_\lambda,
 \qquad
 p_\lambda=\Tr[(P^A_\lambda\otimes P^B_{\bar\lambda})\rho_{AB}].
 \label{S:eq:EDbound}
\end{equation}

\emph{Proof.}
Alice and Bob perform their local isotypic measurements $\{P^A_\lambda\}$ and $\{P^B_\mu\}$.  Equation~\eqref{S:eq:invdec} makes the outcomes perfectly correlated as $(\lambda,\bar\lambda)$.  This measurement also removes possible coherences between distinct $\lambda$.  Conditioned on $\lambda$, the state is necessarily
\begin{equation}
 |\Phi_\lambda\rangle\!\langle\Phi_\lambda|\otimes\sigma_\lambda,
\end{equation}
with an arbitrary state $\sigma_\lambda$ on the multiplicity spaces.  Locally discarding those spaces leaves $|\Phi_\lambda\rangle$.  Applying this protocol independently to many copies and sorting the classical outcomes distills the average in Eq.~\eqref{S:eq:EDbound}.  \hfill$\square$

\section{Decoupled halves and uniform analytic pressure}\label{app:pressure}

Let the chain have $N=2n$ sites and first remove all interactions crossing the central cut, so $H^{(0)}=H_A+H_B$.  Reflection symmetry makes $H_A$ and $H_B$ unitarily equivalent.  Since $H_A$ commutes with the half-chain $SU(2)$ action,
\begin{equation}
 H_A=\bigoplus_j I_{V_j}\otimes K_{n,j},
 \qquad
 a_{n,j}=\Tr_{M_j}e^{-\beta K_{n,j}}\geq0.
 \label{S:eq:aj}
\end{equation}
Projecting $e^{-\beta H^{(0)}}$ onto the global singlet and tracing within each locally distinguishable block gives
\begin{equation}
 p_j^{(0)}=\frac{a_{n,j}^2}{D_n},\qquad D_n=\sum_{k}a_{n,k}^2.
 \label{S:eq:pj}
\end{equation}

Let $U_n(g)=U_s(g)^{\otimes n}$ be the half-chain representation and introduce the normalized twisted partition function
\begin{equation}
 F_{n,\beta}(g)=\Tr(e^{-\beta H_A}U_n(g)),
 \qquad f_{n,\beta}(g)=\frac{F_{n,\beta}(g)}{Z_A},
 \qquad Z_A=\Tr e^{-\beta H_A}.
 \label{S:eq:twist}
\end{equation}
It is a class function.  The block form in Eq.~\eqref{S:eq:aj} gives its character expansion
\begin{equation}
 F_{n,\beta}(g)=\sum_j a_{n,j}\chi_j(g),
 \qquad
 a_{n,j}=\int_{SU(2)}dg\,\chi_j(g)^*F_{n,\beta}(g).
 \label{S:eq:characterexpansion}
\end{equation}

The saddle analysis uses a high-temperature cluster expansion that remains uniform after adding the complex one-site source generating the $SU(2)$ twist. The required statement is recorded below.

\textbf{Lemma B.1 (uniform complex-twist pressure).}
Let $H_n=\sum_{X\subset[1,n]}\Phi_X$ be the restriction of a fixed one-dimensional,
finite-range, bounded interaction with a fixed finite-period unit cell.  Assume that every
$\Phi_X$ is $SU(2)$ invariant and that each site transforms in a fixed spin-$s$
irreducible representation.  Put $\bm M_n=\sum_{x=1}^n\bm S_x$.  There exist
$\beta_{\rm ce}>0$ and $\rho>0$ such that, for $0\leq\beta<\beta_{\rm ce}$ and
$\bm z\in\mathbb C^3$ with $\|\bm z\|<\rho$,
\begin{equation}
 L_{n,\beta}(\bm z)
 :=\log\frac{\Tr\exp[-\beta H_n+i\bm z\cdot\bm M_n]}
 {\Tr e^{-\beta H_n}}
 =n\phi_\beta(\bm z)+b_{n,\beta}(\bm z),
 \label{S:eq:complex-pressure}
\end{equation}
where the branch is fixed by $L_{n,\beta}(0)=0$.  The functions
$\phi_\beta$ and $b_{n,\beta}$ are analytic on the same polydisc and, for every
multi-index $\alpha$ with $|\alpha|\leq3$,
\begin{equation}
 \sup_{0\leq\beta<\beta_{\rm ce}}
 \sup_{n\geq1}\sup_{\|\bm z\|<\rho}
 \left|\partial_{\bm z}^{\alpha}b_{n,\beta}(\bm z)\right|<\infty,
 \qquad
 \sup_{0\leq\beta<\beta_{\rm ce}}
 \sup_{\|\bm z\|<\rho}
 \left|\partial_{\bm z}^{\alpha}\phi_\beta(\bm z)\right|<\infty.
 \label{S:eq:complex-pressure-derivatives}
\end{equation}
Moreover, $\phi_\beta(0)=b_{n,\beta}(0)=0$, both functions are invariant under the
adjoint $SO(3)$ action, and
\begin{equation}
 \partial_a\partial_b\phi_\beta(0)=-\kappa_\beta\delta_{ab},
 \qquad
 \kappa_\beta=\lim_{n\to\infty}\frac{1}{3n}
 \langle\bm M_n^2\rangle_{\beta,c}.
 \label{S:eq:complex-pressure-hessian}
\end{equation}
After reducing $\beta_{\rm ce}$ if necessary, $\kappa_\beta\geq s(s+1)/6>0$.

\emph{Proof.}
Because $[H_n,M_n^a]=0$ for each $a$,
\begin{equation}
 \Tr(e^{-\beta H_n}e^{i\bm z\cdot\bm M_n})
 =\Tr\exp[-\beta H_n+i\bm z\cdot\bm M_n].
\end{equation}
Treat the exponent as a dimensionless complex interaction
\begin{equation}
 K_{\beta,\bm z}(X)=
 \begin{cases}
  \beta\Phi_{\{x\}}-i\bm z\cdot\bm S_x,&X=\{x\},\\
  \beta\Phi_X,&|X|\geq2.
 \end{cases}
 \label{S:eq:dimensionless-interaction}
\end{equation}
The polymer expansion for finite-spin quantum lattice systems with multi-body
interactions is absolutely convergent whenever the corresponding weighted interaction
norm is sufficiently small.  For the family in Eq.~\eqref{S:eq:dimensionless-interaction},
that norm is bounded by $C_1\beta J_{\rm loc}+C_2s\|\bm z\|$.  Hence one may choose
$\beta_{\rm ce}$ and a larger polydisc of radius $\rho_+>\rho$ so that the expansion
converges uniformly for $\beta<\beta_{\rm ce}$ and $\|\bm z\|<\rho_+$.
The expansion is algebraic in the interaction matrices and therefore applies to their
complexification; absolute convergence supplies a nonzero finite-volume partition
function and a consistent analytic logarithm.

Subtracting the expansion at $\bm z=0$ gives
\begin{equation}
 L_{n,\beta}(\bm z)=\sum_{\Gamma\cap[1,n]\neq\varnothing}
 \widehat w_{\beta,\bm z}(\Gamma),
 \qquad
 \widehat w_{\beta,0}(\Gamma)=0,
 \label{S:eq:connected-twist-expansion}
\end{equation}
with an exponentially summable connected-cluster bound.  Uniform convergence on the
larger polydisc and Cauchy's estimate imply, for $|\alpha|\leq3$,
\begin{equation}
 \sup_x\sum_{\Gamma\ni x}e^{\mu\operatorname{diam}\Gamma}
 \sup_{\|\bm z\|<\rho}
 |\partial_{\bm z}^{\alpha}\widehat w_{\beta,\bm z}(\Gamma)|
 \leq C_\alpha.
 \label{S:eq:connected-twist-derivative-bound}
\end{equation}
For a translation class of diameter $d$, the number of embeddings into an open interval
of length $n$ is $n+O(d)$; classes with $d>n$ have an exponentially small total weight.
Separating the extensive embedding count from the boundary discrepancy gives
Eq.~\eqref{S:eq:complex-pressure}, and Eq.~\eqref{S:eq:connected-twist-derivative-bound}
gives Eq.~\eqref{S:eq:complex-pressure-derivatives}.  A fixed number of end terms changes
only the uniformly bounded boundary function.

Global $SU(2)$ invariance implies
$L_{n,\beta}(R\bm z)=L_{n,\beta}(\bm z)$ for every $R\in SO(3)$, and the same holds
for the bulk and boundary pieces.  Their gradients therefore vanish at the origin and
the bulk Hessian is a scalar matrix.  Differentiating the finite-volume logarithm twice
at zero identifies that scalar with the connected susceptibility in
Eq.~\eqref{S:eq:complex-pressure-hessian}.  The same convergent expansion makes
$\kappa_\beta$ analytic in $\beta$.  At $\beta=0$,
$\kappa_0=s(s+1)/3$, so continuity yields the stated strictly positive lower bound after
reducing the temperature interval. \hfill$\square$

\paragraph{Consequence for the two group saddles.}
Taylor's theorem and Lemma B.1 imply, for real $\bm X$ in a fixed neighborhood of the
identity,
\begin{align}
 -C\|\bm X\|^2&\leq\Re\phi_\beta(\bm X)\leq-c\|\bm X\|^2,
 &\|\nabla\phi_\beta(\bm X)\|&\leq C\|\bm X\|,\\
 |b_{n,\beta}(\bm X)|&\leq C\|\bm X\|^2,
 &\|\nabla b_{n,\beta}(\bm X)\|&\leq C\|\bm X\|.
 \label{S:eq:complex-pressure-local-bounds}
\end{align}
The boundary estimates use $b_{n,\beta}(0)=0$ and rotational invariance.  Exponentiation
therefore gives the local Gaussian upper and lower bounds and the gradient estimate used
in Lemma C.1.  Multiplication by the central element $-I$ changes only the scalar phase
$(-1)^{2sn}$, so the same estimates hold at both elements of the center.

\section{Uniform twisted-partition bounds on the group}\label{app:twisted-bounds}

\textbf{Lemma C.1 (uniform twisted-partition bounds).}
Let $\mathcal C=\{\pm I\}$ be the center of $SU(2)$ and
$r(g)=\operatorname{dist}(g,\mathcal C)$ for any fixed bi-invariant metric.  There is a $\beta_{\rm tw}>0$ such that, for $0\leq\beta<\beta_{\rm tw}$, constants $c_\beta,C_\beta>0$, independent of $n$, satisfy
\begin{align}
 |f_{n,\beta}(g)|&\leq Ce^{-cnr(g)^2},
 \label{S:eq:twistgaussian}\\
 \|\nabla f_{n,\beta}(g)\|&\leq
 C[1+nr(g)]e^{-cnr(g)^2}+Cne^{-cn}.
 \label{S:eq:twistgradient}
\end{align}
Consequently,
\begin{align}
 c n^{-3/2}\leq\int dg\,|f_{n,\beta}(g)|^2&\leq Cn^{-3/2},
 \label{S:eq:L2twist}\\
 \int dg\,|f_{n,\beta}(g)|&\leq Cn^{-3/2},
 \qquad
 \int dg\,\|\nabla f_{n,\beta}(g)\|^2\leq Cn^{-1/2}.
 \label{S:eq:L1H1twist}
\end{align}

\emph{Proof.}
Lemma B.1, applied to the half-chain Hamiltonian $H_A$, gives the local Gaussian upper
and lower bounds and the gradient estimate in a fixed neighborhood of the identity.
Because the on-site spin is irreducible,
$f_{n,\beta}(-g)=(-1)^{2sn}f_{n,\beta}(g)$, so the same estimates hold around the second
central element.  It remains to control the compact region outside those two
neighborhoods.

We next establish the complementary region directly at the partition-function level.  Retaining the interaction decomposition above, set
\begin{equation}
 r_0=\max_X|X|,
 \qquad
 J=\sup_n\frac1n\sum_{X\subset A}\|\Phi_X\|<\infty.
 \label{S:eq:interactionconstants}
\end{equation}
Finite range and bounded strength make both constants independent of $n$.  Fix the saddle radius $\delta>0$ small enough that $K_\delta=\{g:r(g)\geq\delta\}$ is nonempty.  Irreducibility and the equality condition in $|\Tr U|\leq\dim U$ give
\begin{equation}
 0<\bar q_\delta=\sup_{g\in K_\delta}
 \frac{|\chi_s(g)|}{d_s}<1,
 \qquad d_s=2s+1.
 \label{S:eq:charactercontraction}
\end{equation}
Expand the exponential in ordered interaction tuples.  If
$S=\bigcup_{a=1}^kX_a$, then the operator
$A_S=\Phi_{X_1}\cdots\Phi_{X_k}$ is supported on $S$.  Tensor-product factorization and the trace-norm bound give
\begin{equation}
 \Tr[U_n(g)A_S]=\chi_s(g)^{n-|S|}
 \Tr_S[U_S(g)A_S],
 \qquad
 |\Tr_S[U_S(g)A_S]|
 \leq d_s^{|S|}\prod_{a=1}^k\|\Phi_{X_a}\|.
 \label{S:eq:supportfactorization}
\end{equation}
Thus every site not covered by the interaction tuple supplies one strict character-contraction factor.  Therefore, for $g\in K_\delta$,
\begin{align}
 &d_s^{-n}\left|
 \Tr\!\left[U_n(g)\Phi_{X_1}\cdots\Phi_{X_k}\right]
 \right|\nonumber\\
 &\hspace{1cm}\leq
 \bar q_\delta^{n-|S|}\prod_{a=1}^k\|\Phi_{X_a}\|
 \leq
 \bar q_\delta^{n-r_0k}\prod_{a=1}^k\|\Phi_{X_a}\|.
 \label{S:eq:tuplebound}
\end{align}
Absolute summation of the exponential series and
$\sum_X\|\Phi_X\|\leq Jn$ yield
\begin{equation}
 \frac{|F_{n,\beta}(g)|}{d_s^n}
 \leq \bar q_\delta^n
 \exp\!\left(\beta Jn\bar q_\delta^{-r_0}\right).
 \label{S:eq:awaynumerator}
\end{equation}
The spectral bound $\|H_A\|\leq Jn$ gives
$Z_A\geq d_s^ne^{-\beta Jn}$.  Hence
\begin{equation}
 |f_{n,\beta}(g)|
 \leq \exp\!\left\{-n\left[
 -\log\bar q_\delta-\beta J(1+\bar q_\delta^{-r_0})
 \right]\right\}.
 \label{S:eq:awaybound}
\end{equation}
The bracket is positive after reducing $\beta_{\rm tw}$.
More explicitly, this complementary estimate holds with a positive rate whenever
\begin{equation}
 \beta<\beta_{\rm away}(\delta)
 :=\frac{-\log\bar q_\delta}
 {J(1+\bar q_\delta^{-r_0})}.
 \label{S:eq:awaythreshold}
\end{equation}

Put $C_s=\sup_{\|Y\|=1}\|dU_s(Y)\|<\infty$.  For a unit Lie-algebra vector $Y$, the product rule for $D_YU_n(g)$ marks one site, and
$|\Tr[U_s(g)dU_s(Y)]|\leq d_sC_s$.  In a term supported on $S$, a marked site in $S$ contributes at most $C_s|S|\bar q_\delta^{n-|S|}$, while a marked site in $S^c$ contributes at most
$C_s(n-|S|)\bar q_\delta^{n-|S|-1}$.  Thus, with
$t=\beta Jn\bar q_\delta^{-r_0}$ and $|S|\leq r_0k$, absolute summation gives
\begin{align}
 \frac{\|\nabla F_{n,\beta}(g)\|}{d_s^n}
 &\leq C_s\bar q_\delta^n
 \sum_{k\geq0}\frac{(\beta Jn)^k}{k!}
 \left(r_0k\bar q_\delta^{-r_0k}
 +n\bar q_\delta^{-1-r_0k}\right)\nonumber\\
 &\leq C_s\bar q_\delta^n e^t
 \left(r_0t+n\bar q_\delta^{-1}\right).
 \label{S:eq:awaygradientnumerator}
\end{align}
Dividing by $Z_A$ (which is independent of $g$) and using the same positive exponent as in Eq.~\eqref{S:eq:awaybound} gives
\begin{equation}
 \|\nabla f_{n,\beta}(g)\|
 \leq C_{\beta,\delta}n e^{-c_{\beta,\delta}n},
 \qquad g\in K_\delta.
 \label{S:eq:awaygradient}
\end{equation}
Together with Eq.~\eqref{S:eq:complex-pressure-local-bounds}, these complementary estimates imply Eqs.~\eqref{S:eq:twistgaussian} and \eqref{S:eq:twistgradient} after adjusting constants on the compact group.
The local expansion also gives a lower bound of fixed modulus in balls of radius $c_0n^{-1/2}$ about $\mathcal C$.  Haar measure is three dimensional there, $dg\asymp d^3X$; integrating the upper and lower saddle bounds proves Eqs.~\eqref{S:eq:L2twist} and \eqref{S:eq:L1H1twist}.  \hfill$\square$

\section{Character estimates and logarithmic representation yield}\label{app:characters}

Character orthogonality and Eq.~\eqref{S:eq:characterexpansion} convert these integrated estimates directly into
\begin{equation}
 \frac{D_n}{Z_A^2}=\int dg\,|f_{n,\beta}(g)|^2\asymp n^{-3/2}.
 \label{S:eq:Dbounds}
\end{equation}
Since $|\chi_j(g)|\leq d_j=2j+1$, Eq.~\eqref{S:eq:L1H1twist} also gives
\begin{equation}
 \frac{a_{n,j}}{Z_A}\leq Cd_jn^{-3/2}.
 \label{S:eq:coefficientbound}
\end{equation}
It follows that, for $0<x\leq1$,
\begin{equation}
 \Pr_{p^{(0)}}(d_j\leq x\sqrt n)
 \leq Cx^3.
 \label{S:eq:lowtail}
\end{equation}
Indeed, the numerator is bounded by
$CZ_A^2n^{-3}\sum_{d_j\leq x\sqrt n}d_j^2
\leq CZ_A^2x^3n^{-3/2}$, while Eq.~\eqref{S:eq:Dbounds} lower bounds the denominator.

For the upper side, take the bi-invariant Laplacian normalized by
$-\Delta\chi_j=j(j+1)\chi_j$.  Parseval and integration by parts yield
\begin{equation}
 \sum_j j(j+1)\left(\frac{a_{n,j}}{Z_A}\right)^2
 =\int dg\,\|\nabla f_{n,\beta}(g)\|^2\leq Cn^{-1/2}.
\end{equation}
Together with Eq.~\eqref{S:eq:Dbounds}, this proves
\begin{equation}
 \mathbb E_{p^{(0)}}[j(j+1)]\leq Cn.
 \label{S:eq:casimirmoment}
\end{equation}
Writing $X_n=d_j/\sqrt n$, Eq.~\eqref{S:eq:lowtail} gives
$\mathbb E[-\log X_n;X_n\leq1]=O(1)$ by tail integration.  On the positive side, Eq.~\eqref{S:eq:casimirmoment} and Jensen's inequality give
\begin{equation}
 \mathbb E[(\log X_n)_+]
 \leq\frac12\mathbb E\log(1+X_n^2)
 \leq\frac12\log(1+\mathbb E X_n^2)=O(1).
\end{equation}
Consequently,
\begin{equation}
 \sum_jp_j^{(0)}\log_2(2j+1)
 =\frac12\log_2 n+O_\beta(1).
 \label{S:eq:decoupledresult}
\end{equation}

\section{Stability under a cut-local Gibbs perturbation}\label{app:stability}

The following stability lemma controls the effect of restoring the interactions across the bipartition.

\textbf{Lemma E.1 (cut-local stability of the representation distribution).}
Let $H^{(0)}=H_A+H_B$ and $H=H^{(0)}+V_\partial$, where $V_\partial$ is supported in
a fixed-size neighborhood of the cut and is $SU(2)$ invariant.  Let $\sigma_0$ and
$\sigma$ be the normalized global-singlet Gibbs states of $H^{(0)}$ and $H$,
respectively.  At every fixed inverse temperature for which the QBP locality estimate is
exponential, there are constants $C,\mu,c>0$ such that, for every $R$, one can choose an
$SU(2)$-invariant operator $\eta_R$ supported within distance $R$ of the cut with
\begin{equation}
 \sigma=\frac{\eta\sigma_0\eta^\dagger}{d},
 \qquad d=\Tr(\eta\sigma_0\eta^\dagger)\geq c,
 \qquad
 \|\eta-\eta_R\|\leq Ce^{-\mu R},
 \qquad \|\eta_R\|\leq C.
 \label{S:eq:cut-stability-qbp}
\end{equation}
If $Q_{\leq L}$ and $Q_{\geq L}$ are the spectral projectors of the spin of half $A$,
then
\begin{align}
 \Tr(Q_{\leq L}\sigma)
 &\leq C\Tr(Q_{\leq L+cR}\sigma_0)+Ce^{-\mu R},
 \label{S:eq:cut-stability-low}\\
 \Tr(Q_{\geq L}\sigma)
 &\leq C\Tr(Q_{\geq L-cR}\sigma_0)+Ce^{-\mu R}.
 \label{S:eq:cut-stability-high}
\end{align}
Writing $D=2J_A+1$, one also has
\begin{equation}
 \Tr(D^2\sigma)
 \leq C\left[\Tr(D^2\sigma_0)+R^2+n^2e^{-\mu R}\right].
 \label{S:eq:cut-stability-moment}
\end{equation}

\emph{Proof.}
For $H(s)=H^{(0)}+sV_\partial$, QBP gives
\begin{equation}
 e^{-\beta H(s)}=\eta(s)e^{-\beta H^{(0)}}\eta(s)^\dagger,
 \qquad
 \frac{d\eta(s)}{ds}=-G(s)\eta(s),
 \qquad \|G(s)\|\leq\frac{\beta}{2}\|V_\partial\|.
 \label{S:eq:qbp-ode}
\end{equation}
The ordered exponential is invertible and the differential equation for its inverse gives
\begin{equation}
 \|\eta(s)\|,\ \|\eta(s)^{-1}\|
 \leq e^{\beta s\|V_\partial\|/2}.
 \label{S:eq:qbp-inverse}
\end{equation}
Every QBP generator is $SU(2)$ invariant because $H(s)$ and $V_\partial$ are invariant;
hence $[\eta,P_0]=0$.  Projecting Eq.~\eqref{S:eq:qbp-ode} onto the singlet sector and
normalizing gives Eq.~\eqref{S:eq:cut-stability-qbp}, with
$d\geq\|\eta^{-1}\|^{-2}$.

The finite-range QBP locality theorem supplies a truncation supported in the radius-$R$
neighborhood with exponentially small operator-norm error.  Haar averaging that
truncation preserves its support and norm bound, does not increase the error, and makes
it exactly $SU(2)$ invariant.

Decompose $A=A_0A_R$, where $A_R$ contains the $R$ sites nearest the cut.  If the
untouched region $A_0$ has spin $\ell$, both the incoming and outgoing total spins of
$A$ lie within the tensor-product interval generated by $\ell$ and the spins in $A_R$.
Consequently any operator supported on $A_RB_R$ has the exact bandwidth property
\begin{equation}
 P_k\eta_RP_j=0\quad\text{whenever}\quad |k-j|>cR,
 \label{S:eq:qbp-bandwidth}
\end{equation}
where $P_j$ projects onto half-chain spin $j$.  Equivalently,
\begin{equation}
 Q_{\leq L}\eta_R=Q_{\leq L}\eta_RQ_{\leq L+cR},
 \qquad
 Q_{\geq L}\eta_R=Q_{\geq L}\eta_RQ_{\geq L-cR}.
 \label{S:eq:qbp-selection-projectors}
\end{equation}
The decoupled state $\sigma_0$ commutes with every $P_j$.  Inserting the corresponding
projector on both sides of $\sigma_0$, using $\|\eta_R\|\leq C$, and comparing $\eta_R$
with $\eta$ gives Eqs.~\eqref{S:eq:cut-stability-low} and
\eqref{S:eq:cut-stability-high}.  The comparison uses, for every projector $Q$,
\begin{equation}
 \left|\Tr Q(\eta\sigma_0\eta^\dagger-
 \eta_R\sigma_0\eta_R^\dagger)\right|
 \leq\|\eta-\eta_R\|(\|\eta\|+\|\eta_R\|).
 \label{S:eq:qbp-projector-error}
\end{equation}

For the moment bound, first use Eq.~\eqref{S:eq:qbp-bandwidth}.  Whenever
$P_k\eta_RP_j\neq0$, the eigenvalues $d_j=2j+1$ and $d_k=2k+1$ obey
$d_k\leq d_j+2cR$, and hence $d_k^2\leq2d_j^2+8c^2R^2$.  Since
$[\sigma_0,P_j]=0$,
\begin{align}
 \Tr(D^2\eta_R\sigma_0\eta_R^\dagger)
 &\leq2\|\eta_R\|^2\Tr(D^2\sigma_0)
 +8c^2R^2\|\eta_R\|^2.
 \label{S:eq:qbp-banded-moment}
\end{align}
Finally $\|D^2\|=O(n^2)$, so replacing $\eta_R$ by $\eta$ costs at most
$Cn^2e^{-\mu R}$ by the same trace-norm estimate as
Eq.~\eqref{S:eq:qbp-projector-error}.  Division by the uniformly positive normalization
$d$ proves Eq.~\eqref{S:eq:cut-stability-moment}. \hfill$\square$

\paragraph{Completion of the logarithmic scaling.}
Choose $R=q\log n$ with $q$ large enough that $e^{-\mu R}\leq n^{-4}$.  If the
decoupled distribution satisfies
\begin{equation}
 \Pr_{\sigma_0}(2j+1\leq x\sqrt n)\leq Cx^3,
 \qquad
 \mathbb E_{\sigma_0}(2j+1)^2\leq Cn,
 \label{S:eq:decoupled-input-tail}
\end{equation}
then Lemma E.1 gives
\begin{equation}
 \Pr_\sigma(2j+1\leq x\sqrt n)
 \leq C\left(x+\frac{C\log n}{\sqrt n}\right)^3+n^{-4},
 \qquad
 \mathbb E_\sigma(2j+1)^2\leq Cn.
 \label{S:eq:interacting-output-tail}
\end{equation}
Tail integration controls the negative logarithmic part, while Jensen's inequality applied
to $\log(1+D^2/n)$ controls the positive part.  Thus
\begin{equation}
 \mathbb E_\sigma\log_2(2j+1)=\frac12\log_2n+O_\beta(1)
 =\frac12\log_2N+O_\beta(1).
 \label{S:eq:cut-stability-final}
\end{equation}

\section{Exact dimer entanglement and asymptotics}\label{app:dimer}

Each dimer Hilbert space is $V_0\oplus V_1$, with Boltzmann weight one on $V_0$ and $y=e^{-\beta\Delta}$ on $V_1$.  For $L$ dimers in one half, define the representation coefficients
\begin{equation}
 a_{L,j}(y)=\int_{SU(2)}dg\,\chi_j(g)^*[1+y\chi_1(g)]^L.
\end{equation}
Equivalently they obey the stable recursion
\begin{align}
 a_{\ell+1,0}&=a_{\ell,0}+y a_{\ell,1},\\
 a_{\ell+1,j}&=a_{\ell,j}+y(a_{\ell,j-1}+a_{\ell,j}+a_{\ell,j+1}),\quad j\geq1,
\end{align}
with $a_{0,j}=\delta_{j0}$.  The global-singlet block weights are
\begin{equation}
 p_j=\frac{a_{L,j}^2}{\sum_ka_{L,k}^2}.
\end{equation}
Let $\tau_{L,j}^{A,B}$ denote the normalized thermal states on the two multiplicity spaces.  Since no dimer crosses the cut, the projected thermal state has the locally flagged form
\begin{equation}
 \rho_{\beta,0}=\bigoplus_j p_j
 |\Phi_j\rangle\!\langle\Phi_j|\otimes
 \tau_{L,j}^A\otimes\tau_{L,j}^B.
 \label{S:eq:dimerflagged}
\end{equation}
Measuring the local flag $j$ and discarding the multiplicity systems gives the lower bound in Lemma A.1.  Conversely, diagonalizing each product state $\tau_{L,j}^A\otimes\tau_{L,j}^B$ provides a pure-state decomposition with average entanglement $\sum_jp_j\log_2(2j+1)$, so convexity gives the same upper bound on $\EF$.  The general inequality $\ED\leq\EF$ therefore proves
\begin{equation}
 \ED=\EF=\sum_jp_j\log_2(2j+1),
\end{equation}
in agreement with the locally distinguishable zero-angular-momentum mixtures analyzed in Refs.~\cite{Livine2005,Moharramipour2024}.

We now justify the constant term, including the tails needed to pass from a local saddle to a logarithmic expectation.  Normalize the single-dimer character as
\begin{equation}
 q_y(g)=\frac{1+y\chi_1(g)}{1+3y},
 \qquad
 b_{L,j}=\frac{a_{L,j}}{(1+3y)^L}.
 \label{S:eq:dimerq}
\end{equation}
For fixed $y>0$, $|q_y(g)|=1$ only at the two central elements.  In exponential coordinates about either center,
\begin{equation}
 q_y(\exp X)=1-\kappa\|X\|^2+O(\|X\|^4),
 \qquad
 \kappa(\beta)=\frac{y}{1+3y}.
\end{equation}
To evaluate the character coefficient, use the radial Haar measure
$dg=c\sin^2(\theta/2)d\theta$ and
$\chi_j(\theta)=\sin[(2j+1)\theta/2]/\sin(\theta/2)$.  Split the integral into fixed neighborhoods of the two centers and their complement.  The complement is exponentially small in $L$; in the two neighborhoods, rescaling $\theta=t/\sqrt L$ and applying dominated convergence gives the three-dimensional Fourier saddle.  Thus, for every fixed $A<\infty$, uniformly for $0\leq j\leq A\sqrt L$,
\begin{equation}
 b_{L,j}=C_yL^{-3/2}(2j+1)
 \exp\!\left[-\frac{j(j+1)}{4\kappa L}\right][1+o(1)].
 \label{S:eq:saddle}
\end{equation}
Here $C_y>0$ is independent of $j$ and $L$; its value will cancel from normalized probabilities.

The same local Gaussian and away-from-center estimates provide the global tail controls needed for the logarithmic observable.  Character orthogonality gives
\begin{align}
 B_L:=\sum_jb_{L,j}^2
 &=\int dg\,|q_y(g)|^{2L}\asymp L^{-3/2},
 \label{S:eq:dimerBL}\\
 0\leq b_{L,j}&\leq C_y'(2j+1)L^{-3/2}.
 \label{S:eq:dimercoefficient}
\end{align}
Moreover, applying the group Laplacian and using
$\nabla q_y(\exp X)=O(\|X\|)$ at the two saddles yields
\begin{equation}
 \sum_j j(j+1)b_{L,j}^2
 =\int dg\,\|\nabla q_y(g)^L\|^2
 \leq C_yL^{-1/2}.
 \label{S:eq:dimermomentraw}
\end{equation}
Equations~\eqref{S:eq:dimerBL}--\eqref{S:eq:dimermomentraw} imply, for the probability distribution $p_j=b_{L,j}^2/B_L$,
\begin{align}
 \Pr(2j+1\leq x\sqrt L)&\leq C_yx^3,
 \qquad 0<x\leq1,
 \label{S:eq:dimerlowtail}\\
 \mathbb E[j(j+1)]&\leq C_yL.
 \label{S:eq:dimermoment}
\end{align}
The first inequality follows by summing $(2j+1)^2$ up to $x\sqrt L$, exactly as in Eq.~\eqref{S:eq:lowtail}.

Define the strictly positive rescaled variable
\begin{equation}
 U_L=\frac{2j+1}{\sqrt{8\kappa L}}.
\end{equation}
For any fixed compact interval in $u$, Eq.~\eqref{S:eq:saddle} and a Riemann sum give a density proportional to $u^2e^{-u^2}$.  Equation~\eqref{S:eq:dimermoment} makes the probability outside $U_L\leq A$ uniformly $O(A^{-2})$; sending first $L\to\infty$ and then $A\to\infty$ fixes the normalized limiting density to
\begin{equation}
 h(u)=\frac{4}{\sqrt\pi}u^2e^{-u^2},
 \qquad u>0.
 \label{S:eq:dimerdensity}
\end{equation}
It remains to justify using the unbounded function $\log u$.  Equation~\eqref{S:eq:dimerlowtail} and tail integration give, for sufficiently small $\epsilon>0$,
\begin{equation}
 \sup_L\mathbb E[-\log U_L;U_L<\epsilon]
 \leq C_y\epsilon^3(1+|\log\epsilon|)
 \xrightarrow[\epsilon\downarrow0]{}0.
 \label{S:eq:dimernegativelog}
\end{equation}
For the positive tail, Eq.~\eqref{S:eq:dimermoment} implies $\sup_L\mathbb E U_L^2<\infty$, and Markov's inequality gives, for $M>1$,
\begin{equation}
 \sup_L\mathbb E[\log U_L;U_L>M]
 \leq C_y\frac{1+\log M}{M^2}\xrightarrow[M\to\infty]{}0.
\end{equation}
Thus $\log U_L$ is uniformly integrable, and weak convergence in Eq.~\eqref{S:eq:dimerdensity} implies
\begin{equation}
 \lim_{L\to\infty}\mathbb E\log U_L
 =\frac12\psi(3/2)=1-\frac\gamma2-\ln2.
\end{equation}
Finally,
$\log_2(2j+1)=\frac12\log_2(8\kappa L)+\log_2U_L$, which gives
\begin{equation}
 \ED=\frac12\log_2L+\frac12\log_2(2\kappa)
 +\frac{1-\gamma/2}{\ln2}+o(1).
 \label{S:eq:constant}
\end{equation}
In the dilute low-temperature expansion the half-chain triplet activity before the global singlet projection is $Le^{-\beta\Delta}$; the lowest projected entangled contribution contains one triplet in each half and consequently has weight of order $(Le^{-\beta\Delta})^2$.  The crossover is therefore fixed by $Le^{-\beta\Delta}\sim1$ and
$T_*(N)\sim\Delta/\ln N$.

\section{Exact dilute-activity crossover}\label{app:crossover}

The finite-size collapse converges to a closed analytic scaling function that describes the full activation crossover.

\textbf{Theorem G.1 (exact crossover function).}
Let
\begin{equation}
 Y_L(y)=\frac{\sum_{j=0}^{L}a_{L,j}(y)^2\log_2(2j+1)}
 {\sum_{j=0}^{L}a_{L,j}(y)^2},
 \qquad
 a_{L,j}(y)=\int_{SU(2)}dg\,\chi_j(g)[1+y\chi_1(g)]^L.
 \label{S:eq:crossover-finite}
\end{equation}
For every $x\geq0$,
\begin{equation}
 \lim_{L\to\infty}Y_L(x/L)=\mathcal F(x),
 \label{S:eq:crossover-limit}
\end{equation}
locally uniformly in $x$, where
\begin{align}
 A_j(x)&=\int_{SU(2)}dg\,\chi_j(g)e^{x\chi_1(g)}
 =e^x\left[I_j(2x)-I_{j+1}(2x)\right],
 \label{S:eq:crossover-Aj}\\
 p_j(x)&=\frac{A_j(x)^2}{\sum_{k\geq0}A_k(x)^2}
 =\frac{[I_j(2x)-I_{j+1}(2x)]^2}
 {I_0(4x)-I_1(4x)},
 \label{S:eq:crossover-pj}\\
 \mathcal F(x)&=\sum_{j=0}^{\infty}p_j(x)\log_2(2j+1).
 \label{S:eq:crossover-function}
\end{align}
Here $I_j$ denotes the modified Bessel function.  The two asymptotic regimes are
\begin{align}
 \mathcal F(x)&=x^2\log_2 3+O(x^3), &&x\downarrow0,
 \label{S:eq:crossover-smallx}\\
 \mathcal F(x)&=\frac12\log_2(2x)
 +\frac{1-\gamma/2}{\ln2}+o(1), &&x\to\infty.
 \label{S:eq:crossover-largex}
\end{align}

\emph{Proof.}
Let $\nu_{m,j}$ be the multiplicity of $V_j$ in $V_1^{\otimes m}$, so that
$\chi_1(g)^m=\sum_j\nu_{m,j}\chi_j(g)$.  Expanding the finite-$L$ character gives
\begin{equation}
 a_{L,j}(x/L)=\sum_{m=0}^{L}\binom{L}{m}(x/L)^m\nu_{m,j}.
 \label{S:eq:crossover-binomial}
\end{equation}
For every fixed $m$, the binomial prefactor tends to $x^m/m!$, and therefore
$a_{L,j}(x/L)\to A_j(x)$.  This convergence also passes through the normalized
logarithmic expectation.  Indeed, $\nu_{m,j}=0$ for $j>m$ and
$0\leq\nu_{m,j}\leq3^m$, which gives, uniformly for $x$ in a compact interval
$[0,X]$,
\begin{equation}
 0\leq a_{L,j}(x/L),A_j(x)
 \leq\sum_{m\geq j}\frac{(3X)^m}{m!}.
 \label{S:eq:crossover-factorial-tail}
\end{equation}
The right-hand side has a factorial tail, and its square remains summable after
multiplication by $1+\log(2j+1)$.  Dominated convergence therefore proves
Eq.~\eqref{S:eq:crossover-limit}, locally uniformly in $x$.

For the closed form, parametrize conjugacy classes by $\theta\in[0,2\pi]$ with
$dg=\pi^{-1}\sin^2(\theta/2)d\theta$,
$\chi_j(\theta)=\sin[(2j+1)\theta/2]/\sin(\theta/2)$, and
$\chi_1(\theta)=1+2\cos\theta$.  Then
\begin{align}
 A_j(x)
 &=\frac{e^x}{2\pi}\int_0^{2\pi}d\theta\,
 e^{2x\cos\theta}\left[\cos(j\theta)-\cos((j+1)\theta)\right],
\end{align}
which is Eq.~\eqref{S:eq:crossover-Aj}.  Parseval gives
$\sum_jA_j(x)^2=\int dg\,e^{2x\chi_1(g)}
=e^{2x}[I_0(4x)-I_1(4x)]$, proving Eq.~\eqref{S:eq:crossover-pj}.

At small $x$, $A_0(x)=1+O(x^2)$, $A_1(x)=x+O(x^2)$, and
$A_j(x)=O(x^2)$ for $j\geq2$, which gives Eq.~\eqref{S:eq:crossover-smallx}.
For large $x$, the two-center group saddle, equivalently the uniform Bessel asymptotics,
gives for $j=O(\sqrt x)$
\begin{equation}
 e^{-3x}A_j(x)=C x^{-3/2}(2j+1)
 \exp\left[-\frac{j(j+1)}{4x}\right][1+o(1)],
 \label{S:eq:crossover-saddle}
\end{equation}
uniformly on compact intervals of $j/\sqrt x$.  The same character and group-Laplacian
bounds used for the fixed-temperature saddle control the low-spin and high-spin tails.
Thus
\begin{equation}
 U_x=\frac{2j+1}{\sqrt{8x}}
 \quad\Longrightarrow\quad
 h(u)=\frac4{\sqrt\pi}u^2e^{-u^2},\qquad u>0,
 \label{S:eq:crossover-maxwell}
\end{equation}
and $\log U_x$ is uniformly integrable.  Since
$\mathbb E_h\log U=\frac12\psi(3/2)=1-\gamma/2-\ln2$,
Eq.~\eqref{S:eq:crossover-largex} follows. \hfill$\square$

Because $x=Le^{-\Delta/T}$, the transition from the perturbative regime
$\mathcal F(x)\sim x^2\log_2 3$ to an order-one entanglement resource occurs at
$x=O(1)$.  Equivalently,
\begin{equation}
 \frac{\Delta}{T_*(L)}=\ln L+O(1),
 \qquad
 T_*(N)=\frac{\Delta}{\ln N+O(1)}
 \quad(N=4L).
 \label{S:eq:crossover-scale-refined}
\end{equation}

\section{Exact-diagonalization basis and validation}\label{app:ed}

We used the open $J_1$--$J_2$ Hamiltonian with $J_2/J_1=0.37$.  For $N\leq14$ we constructed the $S^z=0$ computational basis, diagonalized total $\bm S^2$, and retained its kernel.  For $N=16$ we instead used the linearly independent noncrossing valence-bond basis; the Hamiltonian and half-chain $\bm S_A^2$ were solved as generalized eigenvalue problems with the valence-bond Gram matrix.  For every energy eigenstate in the singlet sector we evaluated the projectors onto half-chain $j$, thermally averaged them, and formed
\begin{equation}
 Y_N=\sum_jp_j\log_2(2j+1).
\end{equation}
The two half-chain parities have visibly different finite-size corrections. Figure~\ref{fig:ed} probes the theorem-fixed coefficient through the residual $Y_N-\frac12\log_2N$ for both parities and probes the predicted representation width through the cumulative distributions of $j/\sqrt N$. The numerical tables used for Fig.~\ref{fig:ed} are included in the accompanying reproduction package.

\end{document}